\newcommand{\version}{June, 2014}     
\swapnumbers \pagestyle{myheadings}
\theoremstyle{plain}
\newtheorem{thm}{THEOREM}
\newtheorem{define}[thm]{DEFINITION}
\newtheorem{proposition}[thm]{PROPOSITION}
\newcommand{\beq}{\begin{equation}}
\newcommand{\eeq}{\end{equation}}
\def\beqa{\begin{eqnarray}}
\def\eeqa{\end{eqnarray}}
\newcommand{\one}{{\mathbbm 1}}
\newcommand{\Tr}{{\rm Tr}}
\newcommand{\range}{{\rm range}}
\newcommand{\HH}{{\mathscr H}}
\newcommand{\tdt}{\!\cdot\!}
\def\bce{\begin{center}}
\def\ece{\end{center}}
\def\bit{\begin{itemize}}
\def\eit{\end{itemize}}
\date{\small\version}
\begin{document}
\markboth{\scriptsize{ Characterizing Entropy \qquad  \version}}
         {\scriptsize{ Characterization of Entropy \qquad  \version}}

\title{
\bf{Characterizing Entropy in Statistical Physics and in Quantum Information Theory}}
\author{\vspace{8pt} Bernhard Baumgartner$^1$  \\
\vspace{-4pt}\small{Institut f\"ur Theoretische Physik, Universit\"at Wien}\\
\small{Boltzmanngasse 5, A-1090 Vienna, Austria}}

\maketitle

\begin{abstract}
A new axiomatic characterization with a minimum of conditions for
entropy as a function on the set of states in quantum mechanics is presented.
Traditionally unspoken assumptions are unveiled and replaced by proven consequences of the axioms.
First the Boltzmann-Planck formula is derived. Building on this formula, using the Law
of Large Numbers - a basic theorem of probability theory -
the von Neumann formula is deduced.
Axioms used in older theories on the foundations are now derived facts.
\\[10ex]
PACS numbers: \qquad  02.50.Cw, \quad 03.67.Hk, \quad 05.30\\
\qquad MSC code: \quad 94A17
\\[3ex]
Keywords: Entropy, axiomatic, information, large numbers

\end{abstract}

\footnotetext[1]{\texttt{Bernhard.Baumgartner@univie.ac.at}}

\maketitle

\newpage


\section{Introduction}\label{intro}

``Entropy'' turns up in three different settings:
Firstly as a thermodynamic quantity, \emph{without reference to statistical mechanics}.
This appearance, which was its historical origin, is still under discussion.
(See \cite{LY00} and references therein.)
In the language of such pure thermodynamic physics
a ``state'' is characterized by macroscopic parameters only,
without reference to microscopic structure.
Secondly, in statistical physics, it is defined
as a function of states of a system, in quantum mechanics of density matrices,
which describe the microscopic, atomistic structure.
Thirdly, in information theory, it is
a measure of information, where there is a priori
no reference to any material structure.
Nevertheless, details of the physical matter used to deal with information
turn up as fundamentally important in practising quantum information theory.
Now it is a basic task, a work still in progress,
to characterize functions of microscopic states dealing with details of structure,
so that they serve well as a basis either for thermodynamics
or for information theory,
although such a microscopic description is in principle an external contribution
to these two settings.

In this paper we postulate axioms, based on simple fundamental reasons,
to characterize such a function.
In a short sequence of proving one proposition and one theorem
- these proofs cover less then one page and are completely elementary -
we get Planck's formula for Boltzmann's entropy of special states.
The second theorem selects von Neumann's entropy formula as the only function
of general microscopic states, which fulfills all the basic scientific desiderata.
Its proof covers little more than one page.

The formula for the entropy of a density matrix,
\beq  \label{neumann}
S(\rho)=S_{vN}(\rho)=-k_B \Tr\rho \ln(\rho),
\eeq
introduced by von Neumann \cite{vN29},
has been successfully used in statistical mechanics
dealing with equilibrium states of large systems,
establishing a quantum-mechanical
statistical basis for thermodynamics.
The enormous success of its use had the effect that questions about basic
justification have been posed only much later, \cite{O75}.
The same is to be said about Shannon's formula, put forward in \cite{S48,SW49} for the entropy of classical information;
see \cite{AFN74} for a ``justification''.

We see several reasons to ``revisit'' the basic axiomatics.
One is to ask for counterparts to Lieb-Yngvason's axioms
for thermodynamic entropy, \cite{LY98,LY99,LY00,LY03}.
Another reason is that the foundations
both of quantum mechanics and of thermodynamics are now seen  in a new light,
different from that of former times.
Therefore, explanations of concepts by referring to our ``natural'' classical intuition,
as in \cite{AFN74,O75},  seem questionable.

The main points of difference of our approach
in relation to the old axiomatic approaches are in detail:
\begin{itemize}
  \item \cite{AFN74} deals with information encoded by using classical probability distributions only.
  Quantum mechanical density matrices are not mentioned.
  \item  Neither \cite{AFN74} nor \cite{O75} mention physical substance as a possible theme.
  This can be seen as the \emph{unspoken assumption} that entropy
  does not depend on the type of matter.
  Here, in the present paper, this property of universality
  is neither postulated nor quietly assumed without mentioning it,
  but it is proven as a consequence of two moderate, but fundamental, axioms.
  A discussion of reasons is presented in the Subsection \ref{equiway}.
  \item Invariance of the entropy function under \emph{general} unitary transformations
  and partial isometries is stated in \cite{O75} as ``Postulate I''.
  I can not see such an assumption as ``natural'' without stating reasons.
  The founding fathers of quantum statistical physics did have reasons,
  which we mention in the penultimate paragraph of Subsection \ref{history2},
  and which we update to modern physics in the Subsection \ref{equiway}.
  These reasons alone do \emph{not yet} give \emph{complete} universality,
  there are still some restrictions, stated in our ``Axiom B''.
  The completeness can be inferred only together with ``Axiom A''.
  \item Additivity and Subadditivity, stated as Postulates II and III
  in \cite{O75}, are well known properties of the von Neumann entropy.
  Hence they emerge here, by way of proving Theorem \ref{vNE}, as consequences
  of other axioms, especially by postulating only extensivity and a weak form of monotonicity.
  \item The classical concept of building up a system from its parts
  is, again unspoken, the principle of the axiomatic in \cite{O75}.
  Here we go another way, using the law of large numbers in probability theory,
  as is explained in the paragraph following this list.
  \item  The possibility of entanglement of the whole system with
  other systems is not discussed in \cite{O75}.
  Now, in our approach, appearance of such an entanglement is allowed in some respect,
  and its effect,
  which disappears in the limit of large numbers, is
  discussed in the present paper, Subsection \ref{flesh}.
  \item It is an important effect of our change of view that we get simpler proofs,
  presented in Subsection \ref{proofs} on only two pages.
  The proofs rely mainly on elementary mathematics,
  nothing but a theorem on relative entropy is needed beyond that.
  The proof in \cite{O75} extends over more than four pages, and builds moreover on the result of \cite{AFN74}.
  The proof in \cite{AFN74} covers more than ten pages and needs results of many other investigations,
  with a lot of citations!
\end{itemize}

Now the present paper is built on the basic concepts of probability
and on its expression in density matrices.
The fundamental relation of probability to the physical world, irrespective of philosophical interpretations,
is through the law of \emph{large numbers}.
So, contrary to considering a state of a system as made up of different parts, we consider
in this paper
unlimited \emph{repetitive appearance} of one and the same state,
drawing a characterization of the single state out of its appearance in large numbers.
That change of point of view also marks the essential difference from other new investigations,
as those starting with \cite{R05}.
Those new investigations treat single systems in a ``non-asymptotic'' regime;
they will be noted in Section \ref{other}.

Both in information theory and in statistical mechanics it makes a difference whether one treats single small systems
(a single short message), whether one looks at very large single systems (long messages),
as Boltzmann did,
or studies large sets of systems (or messages) in the asymptotic regime of large numbers.
In this paper we pass over discussions of single systems
and study entropy from a probabilistic point of view.

After the presentation of formal settings and axioms
a discussion of deeper thoughts on epistemological considerations and reasoning
follows in separate Sections.
Also the backgrounds in history and in more recent developments are included in this discourse,
in Sections \ref{discussion} and \ref{history}.

A word on ``axioms'': A set of axioms can either define a framework for a whole class of theories,
(like Kolmogoroff's axioms for probability),
or it can, in the other extreme, present  postulates building up one special theory
(like Newton's axioms for mechanics or Peano's axioms for the natural numbers).
The Lieb-Yngvason axioms are rather of the first kind.
The axioms presented here, in this paper, express scientific ``desiderata'',
discussed in Section \ref{discussion},
based on fundamental rules of quantum mechanics and probability theory.
These axioms form a sequence of increasingly detailed postulates
leading in several steps to the von Neumann entropy;
but without being explicitly tailored to this goal!

In this paper I have tried to present the logico-mathematical content
precisely and in short,
avoiding as far as possible the use of terms known only to specialists.
The mathematics is separated from more profound reasoning.
The deeper reasons for the axioms are discussed
in Section \ref{discussion}, which is not short;
an indispensable minimum of
arguments, however, has to be stated.
In doing so, I use  arguments
given by the ``founding fathers'' of statistical mechanics
to a great deal, but not exclusively.

\section{Axioms, framework, and consequences}\label{main}

\subsection{Setup}\label{setup}

The framework is that of non-relativistic quantum mechanics.
Each physical system $\mathcal{A}$, $\mathcal{B}$, ... is represented by a separable Hilbert space $\HH_\mathcal{A}$,
 $\HH_\mathcal{B}$,  $\ldots$ ,
states are represented by density matrices $\rho_\mathcal{A}$, $\rho_\mathcal{B}$, $\ldots$ .
Superselection rules like conservation of particle numbers give a decomposition into superselection sectors
$\HH_\mathcal{A}=\bigoplus_\ell \HH_{\mathcal{A},\ell}$,
with restricting density matrices
to be of block-diagonal form, $\rho_\mathcal{A}=\bigoplus_\ell \rho_{\mathcal{A},\ell}$,
forbidding non-zero off-diagonal matrix elements connecting different sectors.

A compound system $\mathcal{A}\cup\mathcal{B}$ consisting of distinguishable subsystems
$\mathcal{A}$, $\mathcal{B}$ is represented by $\HH_\mathcal{A}\otimes\HH_\mathcal{B}$.
A superselection sector for system  $\mathcal{A}$ is defined as a space spanned
by all the eigenvectors belonging to an eigenvalue $q$ of an operator $Q_\mathcal{A}$,
or to a set of eigenvalues $q_\alpha$ of several operators $Q_{\alpha,\mathcal{A}}$,
like particle numbers and charges.
In a compound system the particles, charges, etc. may be distributed over the subsystems.
A sector there is defined as a subspace  characterized by the eigenvalues of
 $Q_\alpha = Q_{\alpha,\mathcal{A}}\otimes\one \oplus \one\otimes Q_{\alpha,\mathcal{B}}$.

Considering compound systems consisting of several versions of the same system,
we denote the n-fold tensor products as $\HH_\mathcal{A}^{\otimes n}$.
If each one of the subsystems is in the same state,
 and when there are no correlations between them,
we denote the composed state as $\rho_\mathcal{A}^{\otimes n}$.

In the course of stating axioms and developing their implications
we use firstly systems with Hilbert spaces where all
eigenvalues of the $Q_{\alpha,\mathcal{A}}$ are specified,
like ``a box with three atoms of gold and one atom of lead''.
We name such a space an ``undivided'' Hilbert space.
Considering transformations of states, either through natural or technically manipulated time evolution,
or as hypothetical reversible  mappings,
each unitary transformation $\rho\mapsto\sigma= U\,\rho\,U^\ast$
on such an undivided Hilbert space is admissible in our axiomatic approach.
Acting in a Hilbert space with several sectors, to be consistent with the superselection rules
the unitaries have to commute with all the $Q_\alpha$.
From standard quantum theory of particles we take the existence
of an undivided Hilbert space with infinite dimension for granted.

We distinguish a special type of density matrix, of  importance in our approach:
\begin{define} \textbf{\textrm{QLB states.}}
A ``Quantum Laplace Boltzmann (QLB) state" is
a state with density matrix $\pi(N):=\frac{1}{N}P_N$, where $P_N$ is
a projector onto an undivided $N$-dimensional space,
a (sub)space of a single superselection sector.
\end{define}

When it comes to the details concerning the law of large numbers, the partial traces are needed
to form the essential one-Hilbert-space density matrix $\omega$ defined on $\HH_k$,
out of a density matrix $\Omega$ defined on an n-fold tensor-product $\HH^{\otimes n}$.
\beq\label{partialstate}
\omega(k):=\Tr_{1,2,\ldots k-1, k+1,\ldots n } \,(\Omega)
\eeq
Finally, to get the best possible kind of continuity (which will turn out to be only a lower semicontinuity),
we refer to the topology of states induced by the trace norm of density matrices.

Now the \textbf{entropy} $S$ shall be a function
on the set of states of systems.
This function $S$ shall obey the following conditions:

\subsection{Stating the Axioms and reasons}\label{reasons}

\subsubsection{Equivalences}\label{equivalences}
We start with the a priori assumption that the entropy of a state might be different
for different substances with equivalent density matrix, and that its functional expression might depend on the
basis in which the state is mathematically written as a density matrix $\rho_\mathcal{A}$,
on the relations of $\rho_\mathcal{A}$ to the Hamiltonian or other observables etc.
Since this hypothetical dependence is only a transitory possibility,
we make no complicated indication of details,
we just mark it by the use of indices $\mathcal{A}, \mathcal{B}$.
\medskip\\
\textbf{AXIOM} \textbf{\emph{A.}}
\emph{If a compound system is considered where one part of it is in a pure state
like $\rho_\mathcal{B}=|\psi_\mathcal{B}\rangle\langle\psi_\mathcal{B}|$,
the entropy of the whole system is determined by the other part (or parts) only,
as if part $\mathcal{B}$ where absent.}
\medskip\\
On the meaning a priori non-universality, on compound systems and decomposition see Section \ref{equiway}.
\medskip\\
\textbf{AXIOM \emph{B.}} \emph{Each unitary transformation of states, which commutes with particle number and other
 operators  related to superselection rules, conserves entropy.}
\medskip\\
On reasons for unitary invariance see remarks in Subsections \ref{equiway} and \ref{history2}.
\medskip\\
The first two axioms (A) and (B) \emph{together} then abolish the assumptions of dependencies on substance and basis.
They imply universality and enable comparisons of all systems
with undivided Hilbert spaces.
We can proof universality: $S$ is a function of $\rho$, irrespective of substance and basis.

\begin{proposition}\textbf{Universality.}\label{universality}
Assuming axioms (A) and (B),
entropy of states for systems with undivided Hilbert space
is a mathematical function
of the spectral values (including multiplicity) of density operators only.
It is also independent of the dimension of the Hilbert space.
\end{proposition}

\subsubsection{QLB states and repeated appearances}\label{QLB}

Two QLB states $\pi(M)$ and  $\pi(N)$ are equivalent in the sense of axioms A and B when $M=N$.
Therefore their entropy is characterized by the dimension of their range only.
It is ``natural'' to consider here strict monotonicity.
Additivity of entropy is practically useful; therefore we pose axiom D.
See remarks in Section \ref{backbone}.
An exclusion of negative infinity as value for the entropy of a pure state is stated for convenience.
Positivity of entropies of all other states is guaranteed without that assumption.
\medskip\\
\textbf{AXIOM} \textbf{\emph{C.}}
\emph{ If $M \, > N$, then the entropy of the
QLB state $\pi(M)$ is strictly larger then the entropy of $\pi(N)$.
The entropy of a pure state is not $-\infty$.}
\medskip\\
\textbf{AXIOM} \textbf{\emph{D.}}
\emph{ A compound system, consisting of $n$ equivalent systems in completely equivalent states,
without any correlation,
shall have $n$ times the entropy of each of its parts.}
\medskip\\
These simple axioms already lead to the Boltzmann entropy:
\begin{thm} \textbf{The Boltzmann-Entropy.}\label{QBP}
Assuming axioms (A \ldots D), and
choosing some constant $k_B$, which is universal,
the Boltzmann-Planck formula for the entropy of QLB states holds:
\beq
S(\pi(N))= k_B\cdot \ln N.
\eeq
\end{thm}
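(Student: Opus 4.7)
The plan is to set $f(N) := S(\pi(N))$ and show that $f$ is a strictly increasing solution of the Cauchy-type functional equation $f(N^n) = n f(N)$ on the positive integers, then invoke the standard Erd\H{o}s--R\'enyi argument to force $f(N) = c \ln N$, with the calibration fixing $c = k_B$.

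First I would establish the algebraic properties of $f$. Strict monotonicity $f(M) > f(N)$ for $M > N$ is Axiom C verbatim, and the base case $f(1) = 0$ has already been obtained from Axioms A and D in the text. The key step is the multiplicativity $f(N^n) = n\, f(N)$. Here I would observe that $\pi(N)^{\otimes n} = N^{-n} P_N^{\otimes n}$, where $P_N^{\otimes n}$ is a projector of rank $N^n$ on $\HH^{\otimes n}$. After tensoring with a pure reference state via Axiom A (to reach an undivided Hilbert space of sufficient dimension), this state is unitarily equivalent to $\pi(N^n)$ on an $N^n$-dimensional subspace, so by Axiom B, $S(\pi(N)^{\otimes n}) = S(\pi(N^n)) = f(N^n)$. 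Axiom D then yields $f(N^n) = n f(N)$. This embedding step, reconciling ``tensor powers of QLB states'' with ``QLB states of exponentiated dimension'' via the universality discussion after Axioms A and B, is the main technical subtlety; everything afterwards is a classical functional-equation argument.

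Next, for any integers $M, N \ge 2$ and any $n \ge 1$, I would choose the unique $m = m(n) \ge 0$ with $N^m \le M^n < N^{m+1}$. Monotonicity of $f$ combined with the multiplicativity just derived gives
\beq
m\, f(N) \;\le\; n\, f(M) \;<\; (m+1)\, f(N),
\eeq
while taking logarithms of the defining inequality for $m$ gives
\beq
\frac{m}{n} \;\le\; \frac{\ln M}{\ln N} \;<\; \frac{m+1}{n}.
\eeq
Since $f(N) > f(1) = 0$, both $f(M)/f(N)$ and $\ln M / \ln N$ lie in the interval $[m/n,(m+1)/n]$, so they differ by at most $1/n$. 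Letting $n \to \infty$ yields $f(M)/f(N) = \ln M / \ln N$, i.e.\ $f(N) / \ln N$ is a constant $c$ independent of $N \ge 2$.

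Finally, the calibration $f(2) = k_B \ln 2$ pins down $c = k_B$, so $f(N) = k_B \ln N$ for all $N \ge 2$, and the formula also holds trivially at $N = 1$. I expect the only place requiring real care is the justification that $\pi(N)^{\otimes n}$ and $\pi(N^n)$ carry the same entropy; the rest is a standard squeeze on rationals approximating $\ln M/\ln N$.
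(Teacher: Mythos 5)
Your proposal is correct and follows essentially the same route as the paper: the identity $\pi(N)^{\otimes n}=\pi(N^n)$ combined with Axiom D, the squeeze $2^m\le N^n<2^{m+1}$ (you phrase it with a general base $M$ versus $N$, the paper fixes the base $2$), and the calibration $S(\pi(2))=k_B\ln 2$. Your extra care in justifying $S(\pi(N)^{\otimes n})=S(\pi(N^n))$ via Axioms A and B is a reasonable elaboration of a step the paper simply asserts.
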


\subsubsection{Generality via Large Numbers}\label{generality}
The properties of entropy shall reflect the law of large numbers.
It implies that repeated appearance of the same state, without correlations,
becomes similar to a certain compound state with discrete uniform distribution.
To specialists this is known as the Asymptotic Equipartition Property, \cite{C06}.
In subsection \ref{flesh} we present details, justifying the following strict condition on the function $S$.
\medskip\\
\textbf{AXIOM} \textbf{\emph{E.}}
\emph{ If the density matrix $\rho$ has finite rank and only rational numbers as eigenvalues,
consider numbers $n$ which are common multiples of their denominators
and consider QLB states $\Omega$ with range in $\HH^{\otimes n}$
which simulate the state  $\rho^{\otimes n}$ in such a way,
that each partial trace $\omega(k)$ of $\Omega$, as defined in equ.(\ref{partialstate}),
gives exactly $\rho$.
The entropy of   $\rho^{\otimes n}$ (which, by axiom (D), is $n$ times the entropy of $\rho$,)
shall dominate the entropy of each of these $\Omega$,
but it shall not be larger then necessary for this dominance.}
\medskip\\
\textbf{AXIOM} \textbf{\emph{F.}}
\emph{ $S(\rho)$ is lower semicontinuous.}
\medskip\\
Those states for which entropy is already defined through axioms (A \ldots  E)
form a dense subset of the set of states for each system.
On Hilbert spaces with finite dimension the remaining infinitesimal gaps
can be filled in in such a way, that $S$ becomes a continuous function.
In case of $\HH$ being of infinite dimension the entropy
$S$ can at best only be extended to a lower semicontinuous function, including $+\infty$ in its range.
This is done by posing axiom F.

These axioms lead to one special entropy function:
\begin{thm} \textbf{von Neumann-Entropy.}\label{vNE}
The only function fulfilling all the axioms is von Neumann's entropy, as written in (\ref{neumann}).
\end{thm}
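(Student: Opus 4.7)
The plan is to evaluate $S$ on a trace-norm dense class by means of axiom E, and then extend to all density matrices by axiom F.

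For the dense class, fix $\rho = \sum_{i=1}^d p_i |e_i\rangle\langle e_i|$ of finite rank with rational eigenvalues $p_i = m_i/n$. Axiom E prescribes
$$S(\rho) = \sup_{n,\Omega}\frac{1}{n}S(\Omega),$$
where $n$ runs over common multiples of the denominators and $\Omega$ over QLB states on $\HH^{\otimes n}$ whose one-site partial traces $\omega(k)$ all equal $\rho$. By Theorem 1, $S(\Omega) = k_B \ln \rank\Omega$, so the task reduces to identifying the asymptotic rate of the largest admissible rank. For the lower bound, I would take the natural \emph{type-class} construction: let $\Omega$ be uniform on the subspace spanned by those product vectors $|e_{i_1}\rangle\otimes\cdots\otimes|e_{i_n}\rangle$ in which index $i$ occurs exactly $m_i$ times; its rank is $\frac{n!}{m_1!\cdots m_d!}$, and by permutation symmetry of the $n$ tensor factors every partial trace equals $\rho$ exactly. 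Stirling then yields $\frac{1}{n}\ln\frac{n!}{m_1!\cdots m_d!}\to -\sum_i p_i \ln p_i$, so $S(\rho) \ge S_{vN}(\rho)$. For the upper bound, I would invoke subadditivity of the classical functional $-\Tr(\cdot)\ln(\cdot)$: any admissible $\Omega$ obeys $S_{vN}(\Omega) \le \sum_{k=1}^n S_{vN}(\omega(k)) = n\, S_{vN}(\rho)$, while $S_{vN}(\Omega) = k_B\ln\rank\Omega$ by direct computation for a QLB state; hence $\frac{1}{n}S(\Omega) \le S_{vN}(\rho)$ uniformly. Combining the two bounds gives $S = S_{vN}$ on the whole finite-rank rational-spectrum class.

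For the extension, these states are dense in trace norm, and axiom F identifies $S$ with the lower-semicontinuous envelope of its restriction. Because $S_{vN}$ is itself trace-norm lower-semicontinuous, one obtains $S(\rho) \ge S_{vN}(\rho)$ automatically for every $\rho$. Conversely, approximate $\rho$ by keeping its top $k$ eigenvalues and replacing them by nearby rationals with a common denominator $N_k$; the resulting $\sigma(k) \to \rho$ in trace norm, while continuity of $x\mapsto -x\ln x$ together with dominated or monotone convergence gives $S_{vN}(\sigma(k)) \to S_{vN}(\rho)$ (the limit being $+\infty$ when $\rho$ has infinite entropy). This forces $S(\rho) \le S_{vN}(\rho)$, and hence equality on all states.

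The principal obstacle is the upper-bound step inside axiom E. It is tempting to think a new unitary-invariance or operator-monotonicity argument is required, but the inequality $S_{vN}(\Omega) \le n\, S_{vN}(\rho)$ is just subadditivity of the von Neumann expression, a purely arithmetic property of the target formula rather than an unproven fact about the mystery function $S$. It is legitimate to use here because axiom E defines $S(\rho)$ as a combinatorial supremum over entropies of QLB states, each of which has already been pinned to $k_B \ln \rank \Omega$ by Theorem 1. A secondary delicate point is ensuring, in infinite dimension, that rational-spectrum approximants can be chosen so that both trace-norm convergence and eigenvalue-by-eigenvalue convergence of $-\sum p_i \ln p_i$ hold; standard spectral truncation together with a continuity estimate for $x \mapsto -x\ln x$ handles this without surprises.
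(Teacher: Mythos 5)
Your proposal is correct and follows essentially the same route as the paper: axiom E is evaluated on the rational-spectrum states via the type-class construction plus Stirling for the lower bound, the upper bound is positivity of relative entropy with respect to $\rho^{\otimes n}$ (your appeal to subadditivity of $S_{vN}$ is the same fact, and legitimately applied to the target functional since Theorem \ref{QBP} has already pinned $S(\Omega)=k_B\ln\rank\,\Omega$), and axiom F supplies the lower-semicontinuous extension, which you in fact spell out more explicitly than the paper does. The one detail you omit, which the paper verifies, is that the type-class projector is spanned by product vectors all carrying the same total superselection eigenvalues $\sum_j m_j q_{\alpha,j}$, hence lying in a single undivided sector, so that your $\Omega$ is indeed a QLB state to which Theorem \ref{QBP} applies.
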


\subsection{Axioms in mathematical notation}\label{axioms}

\textbf{A}) \textbf{Decomposition}: \quad $S(\rho_\mathcal{A}\otimes|\psi_\mathcal{B}\rangle\langle\psi_\mathcal{B}|)=S(\rho_\mathcal{A})$,
\quad
 $S(|\psi_\mathcal{A}\rangle\langle\psi_\mathcal{A}|\otimes\rho_\mathcal{B})=S(\rho_\mathcal{B})$.
\medskip\\
\textbf{B}) \textbf{Unitary invariance}: \quad  $S(\rho_\mathcal{A})=S(\sigma_\mathcal{A})$
\,\, if \, $\sigma_\mathcal{A}=U\tdt\rho_\mathcal{A}\tdt U^\ast$,
\, $[U,Q_\alpha]=0\,\,\,\forall\,\alpha$.
\medskip\\
\textbf{C}) \textbf{Monotonicity}: \qquad $M \, > N$ $\Rightarrow$ $S(\pi(M)) > S(\pi(N))$,
\qquad
$S(\pi(1)) \neq -\infty$.\\
\medskip\\
\textbf{D}) \textbf{Extensivity, discrete scaling}: \qquad $S(\rho^{\otimes n})=n\cdot S(\rho)$\\
\medskip\\
\textbf{E}) \textbf{Rational combinatorics with large numbers}: \medskip\\
$$
S(\rho):= \sup_{n,\Omega}\left\{\frac 1 n S(\Omega)\,\, |\quad
\range (\Omega)\subset\HH^{\otimes n},\,\, \Omega \,\, \textrm{a QLB state},\quad \forall k\,\,\, \omega(k)=\rho\right\}
$$
\medskip\\
\textbf{F}) \textbf{Semicontinuity}: \\
$$
S(\rho):= \inf_{\sigma(n)} \left\{\lim_{n\to\infty} S(\sigma(n))\,\, |\qquad
\sigma(n)\to\rho\,\, \textrm{in trace norm, as}\,\, n\to\infty \right\}\\
$$

\subsection{Proofs}\label{proofs}

The first two axioms, (A) and (B), imply universality, they enable comparisons of all systems
with undivided Hilbert spaces:
\begin{proof}\textbf{Universality.}
One may choose one system with an infinite dimensional undivided Hilbert space $\HH$ (without an index)
as a \emph{reference} system, and a pure state $|\psi\rangle\langle\psi|$
with $\psi\in\HH$. For any $\rho_\mathcal{A}$ on the undivided $\HH_\mathcal{A}$ one has
$S(\rho_\mathcal{A})=S(\rho_\mathcal{A}\otimes|\psi\rangle\langle\psi|)$,
and this compound state on $\HH_\mathcal{A}\otimes\HH$, which is also undivided,
may be unitarily transformed to $|\psi_\mathcal{A}\rangle\langle\psi_\mathcal{A}|\otimes\rho$
with any $\psi_\mathcal{A}\in\HH_\mathcal{A}$.
One gets $S(\rho_\mathcal{A})=S(\rho)$, now with $\rho$ a density matrix on the referential Hilbert space $\HH$.
This $\rho$ is
equivalent to $\rho_\mathcal{A}$ by an isometry,
and it is characterized only by having the same eigenvalues.
So $S$ does not depend on the type of matter.
\end{proof}%
%
By axioms (A \ldots D) we have excluded any trivial $S$,
and we infer the restriction of $S$ to non-negative values.
Note that $\pi(1)$ is a pure state,
by axiom A it is equivalent to  $\pi(1)\otimes\pi(1)$,
so discrete scaling  gives
$$ S(\pi(1))=S(\pi(1)\otimes\pi(1))=2\cdot S(\pi(1)).$$
Attribution of $\infty$ is impossible, since other $S(\pi(N))$ have to be strictly larger,
$-\infty$ has been excluded a priori,
so $S(\pi(1))=0$ remains as the only allowed value.

Now we can make a ``calibration'' in setting $S(\pi(2))=k_B\cdot \ln 2$,
and prove  Boltzmann's formula (written down by Planck)
for the QLB states, stated in Theorem \ref{QBP},
as a strict consequence of these axioms:
\begin{proof} \textbf{Boltzmann-entropy.}
Observe that $\pi(N)^{\otimes n}=\pi (N^n)$, and
consider numbers $n$, $m=m(n,N)$, obeying $2^m \leq N^n <2^{m+1}$.
Then $m\cdot k_B\cdot \ln2 \leq n\cdot S(\pi(N)) < (m+1)\cdot k_B\cdot \ln2$.
In the limit of large numbers, $N$ fixed, one gets
$$
S(\pi(N))=\left(\lim_{n\to\infty}\frac {m(n,N)} n \right) \cdot k_B\cdot \ln2 = k_B\cdot \ln N.
$$
\end{proof}
\medskip
Proof of Theorem  \ref{vNE}:
\begin{proof} \textbf{von Neumann-entropy.}
\emph{a) $S_{vN}(\rho)$ is an upper bound for the entropy  $S(\rho)$ of each $\rho$ considerd in E:}
The range of $\Omega$ is, by the conditions on the partial traces,
a subset of $(\range(\rho))^{\otimes n}$ which equals $\range(\rho^{\otimes n})$.
For simplicity we restrict the Hilbert space $\HH$ to $\range(\rho)$.
There we define
the operator $L_1$ and on  $\HH^{\otimes n}$ its ``second quantized form'' $L$,
\beq\label{L1}
L_1 :=\ln\rho \,\, ,\qquad
L:=\sum_k L_{1,k},
\eeq
where $L_{1,k}$ acts in the $k^{th}$ factor of $\HH^{\otimes n}$.
The repeated appearance of $\rho$ in every subsystem, $\forall k\in\{1\ldots n\}$: $\omega(k)=\rho$, implies
\beq\label{LOmega}
\Tr(\Omega\cdot L)=n\cdot\Tr(\rho\cdot L_1)=\Tr(\rho^{\otimes n}\cdot L).
\eeq
By Klein's inequality and the \emph{maximum entropy principle}, \cite{W78,W90}, this implies $S_{vN}(\rho^{\otimes n})\geq S_{vN}(\Omega)$.
We present a proof, adapted to the present situation:
Klein's inequality leads to  positivity of relative entropy (section I,B5 in \cite{W78}, or 3.1 in \cite{R02}, or \cite{NC00}):
$$\Tr(\Omega(\ln\Omega-\ln\rho^{\otimes n}))\geq 0.$$
Note that (\ref{L1}) implies $L=\ln\rho^{\otimes n}$,
and therefore, using also (\ref{LOmega}):
$$
S_{vN}(\Omega)=-\Tr(\Omega\ln\Omega)\leq$$
$$\leq-\Tr(\Omega\cdot\ln\rho^{\otimes n})=-\Tr(\Omega L)=-\Tr(\rho^{\otimes n} L)
=-\Tr(\rho^{\otimes n}\ln \rho^{\otimes n})=S_{vN}(\rho^{\otimes n})
$$

Now $S(\Omega)=S_{vN}(\Omega)$ and $S_{vN}(\rho^{\otimes n})=n\cdot S_{vN}(\rho)$.
\medskip\\
\emph{b) For each $\rho$ considered in E the supremum is reached in the limit $n\to\infty$, giving  $S_{vN}(\rho)$:}
Consider the limit of those large numbers $n$ which are multiples of a common denominator of
all the eigenvalues $r_j$ of $\rho$.
Define $\Omega$ as the state proportional to the projector onto
the linear span of those product states formed with eigenstates $\phi_j$ of $L_1$,
where each such $\phi_j$ appears $m_j$ times, where $m_j =n\cdot r_j$.

Each eigenvector $\phi_j$ of $\rho$ can be chosen as an element of a superselection sector,
an eigenvector of all those operators $Q_\alpha$ whose spectral projectors define the sectors.
With $Q_\alpha |\phi_j\rangle=  q_{\alpha,j} |\phi_j\rangle$
each of the chosen product states has the same eigenvalues
$$
Q|\Omega\rangle =  \big(\bigoplus_k Q_{\alpha,k}\big)\bigotimes_k|\phi_{j(k)}\rangle=\sum_j m_j \,q_{\alpha,j}.
$$
These product states are all in the same sector of $\HH^{\otimes n} $,
so the constructed density matrix $\Omega$ is one block, it is a QLB state for which the axioms A \ldots D are valid.

Combinatorics gives $\dim(\range(\Omega))=N$, with the multinomial expression
\beq\label{bsterling}
N= \left( n \atop {m_1 ,  m_2 , \ldots m_\ell }  \right).
\eeq
By Stirling's formula, the limit $n\to\infty$ gives $\lim_{n\to\infty}\frac 1 n \ln N=-\sum_j m_j \ln(m_j)$.
\medskip\\
\emph{c) Filling the infinitesimal gaps:}
The $\rho$ considered in axiom E form a dense subset
in the set of all density matrices.
On this subset the entropy takes on finite values.
But if $\dim(\HH)=\infty$, the values are unbounded,
and not continuous:
For $\rho=\sum_j r_j\cdot |\phi_j\rangle\langle \phi_j |$
consider the sequence
$\rho(N)=\rho-\frac{1}{N}|\phi_1\rangle\langle \phi_1 |\,+\,\frac{1}{N}\pi(N^2)$,
with $N\geq 1/r_1$.
The sequence of von Neumann entropies of $\rho(N)$ diverges.
So von Neumann entropy, which is known to be lower semicontinuous,
can at best be extended as a semicontinuous function,
including $+\infty$ in its range.
\end{proof}

This theorem is a quantum version of the Asymptotic Equipartition Property.
Note that there exists another proof in \cite{T09},
which involves additional statements and is longer.

\section{Discussion}\label{discussion}

\subsection{On the setup}\label{dsetup}

We deal with general states in non-relativistic Quantum Mechanics.
Entropy has in some way to indicate the ``mixedness'' of a mixed state.
As is well known, quantum mechanical mixing is in general different from the classical
mixing of probability distributions, in  that the decomposition into pure states
is not unique.
But there are still cases of classical mixing appearing in quantum mechanics.
Consider for example a single point in a lattice gas of fermions.
There are only two pure states; the lattice point is either empty or occupied.
No coherent superposition is allowed.
Quantumness appears only in a larger set of lattice points,
where there exist pure states of a particle with a wave function
which extends over several points.
But, however large the system may be, there exists no coherent superposition
of states with different numbers of particles.
Such a superselection rule introduces a classicality into Quantum Physics.
So, a priori, we have here to distinguish a fully quantum mechanical system
with all mixed states out of the same superselection sector,
from states where different sectors are involved in the mixing.
It will come only through considerations of large numbers of parts, forming one large system,
that a universal formula for entropy emerges,
making no difference between the kind of mixing.
The a priori possible difference has to be considered when postulating
the invariance of entropy under unitary transformations
done in axiom (B).

A density matrix $\pi(N)$ is the quantized version of distributions introduced by Laplace in the theory of chance
and by Boltzmann in Statistical Mechanics,
see Section \ref{history2}.
Therefore
we call these $\pi(N)$ the ``Quantum Laplace Boltzmann (QLB) states".
We use the QLB states just as special states, discerned from others by their \emph{symmetry}.
This symmetry makes handling them extremely simple.
But that is again a case where we might have to consider superselection rules.
Mixed states with contributions from different sectors do not allow
for as many decompositions into pure states as do mixed states
which live in only one sector. It has less symmetry.
So we start with these special QLB states with the highest possible symmetry.

\subsection{Equivalences, the way from physics to mathematics }\label{equiway}


If we think of an eager freshman in physics with little a priori knowledge,
we guess that this freshman would intend to think that states describing
electrons should carry completely different amounts of entropy compared with
mathematically equivalent states describing nucleons.
The freshman might think:
``If I describe a pure state of an electron with a wave function, and
a pure state of a nucleon with the same wave function, they have different energies.
For each particle has a different constant relating properties of wave functions
to its kinetic energy, namely mass.
Moreover, a mixed state of a particle, being mixed of two pure states with the same
momentum but in different directions,
has another distribution of energy then a mixture of two pure states with different absolute values of momentum.
So, why should such dependencies on details not appear in the formula for entropy?''
But we can safely answer that
such a difference does not appear, and we can state reasons for that fact!
Entropy is a physical property, described by a mathematical function,
independent of the kind of matter, which is described by a density matrix,
and invariant under unitary transformations.
We regard this fact as remarkable and think it is worthwhile to
state it, discuss and derive it from axioms based on solid reasoning.

Decompositions of systems are, at least since Galileo Galilei, at the heart of modern physics.
At this point we pose the minimal postulate
that parts of a compound system, which are in pure states,
have no effect on the entropy, and may as well be absent.
As an example one may think of a compound system, consisting of two boxes,
one with a million atoms of gold, the second one containing one atom of lead in a pure state.
Thinking of applications in thermodynamics, such a part in a pure state can represent,
in some simplification, a weight which can be heaved up or lowered in a gravitational field.

Equivalence is the point where unitary evolutions and transformations of states come into play.
Physics now has the possibility of steering time evolution on an atomistic scale,
\cite{CTGO11}, without observing contradictions to the Second Law of Thermodynamics.
Entropy does not change in the course of invertible time evolutions,
so we postulate that each unitary transformation which is consistent with superselection rules
does not change entropy.

Comparability of states is an essential feature,
it is the central point in the analysis of the second law and entropy in \cite{LY99}.
While Lieb and Yngvason postulate first the existence of an order relation between states
and regard ``adiabatic equivalence'' as derived,
($\rho\sim\sigma$ if both $\rho\prec\sigma$ and $\sigma\prec\rho$),
the procedure in the present paper starts with equivalences, axioms (A) and (B).
We give here no explicit definition of a mathematical equivalence relation;
talking now about equivalence of states, we refer to the applicability of these two axioms
(which could be used for a precise definition).
Comparison, inequalities for the entropy function (axiom C), is then formulated for special equivalence classes.

Axiom (B) paves the way for comparison
by establishing equivalences of states in one sector, as, f.e.
states of three atoms of gold in one box.
Combining now the equivalences of unitarily transformed states (axiom B) with
forming and reducing compound systems by adding
and discarding pure states of another system (axiom A)
lays threads of equivalent states through
all sectors. A system with three atoms of gold has states which are equivalent
to states of a reference system with one atom of lead,
and again equivalent to some states of a system with two atoms of gold in a box.

\subsection{QLB states form the backbone}\label{backbone}

QLB states are completely characterized by one discrete parameter, so they show a truly ``natural''
ordering, allowing for a ``natural'' inequality between their entropies.

Axiom (D), on strict rules for the entropy function in case of repeated uncorrelated appearance
of parts, is the first step to the law of large numbers.
It is here, that negativity is excluded:
$S(\pi(N^2))=2\cdot S(\pi(N))$ and monotonicity imply $S(\pi(N)>0$.
Only the case $S(\pi(1))=-\infty)$ has to be excluded
a priori.

Note that at this point the Boltzmann-Plank formula is derived for $\pi(N)$ states only,
with the assumption that their range is in one sector.
The probabilistic distribution over different sectors has to be analyzed
by using the law of large numbers.

Choosing the extensivity may be considered as convenient, but not absolutely necessary.
For example, demanding $S(\rho^{\otimes n})=(S(\rho))^n$ instead,
would lead to a different, but closely related formula,
which has a strict relation to the Boltzmann-Planck formula.

\subsection{Flesh on the backbone by the law of large numbers}\label{flesh}

We consider states from the point of view of probability theory.
Directly justifying the interpretations
of entropy as a measure of uncertainty or of information is difficult.
(This has been noted already in \cite{BR81}, chapter 6.2.3.)
Nevertheless, entropy arises naturally in the formalism of the law of large numbers.
This law is rarely addressed explicitly in the context of defining entropy,
but it appears implicitly, when ``average information''
is mentioned (f.e. in \cite{NC00}).
It is basic to application of probability theory.
(Jacob Bernoulli, who established the Law of Large Numbers, see f.e. \cite{B01}, regarded
the derivation of this theorem as a greater achievement
than if he had shown how to square a circle, since a proof of the latter would have been of little use.)

This law says, roughly, that in a typical series of $N$ measurements the number $n_i(N)$ of
appearances of event $i$ lies most probably in an interval
\beq\label{largenumberinterval}
[N-c\surd N, N+c\surd N]\cdot\rho_i.
\eeq
with $\rho_i$ the probability for appearance of event $i$, and $c$ some constant.
Our axiom (E) seems, at a first glance at simple examples, to deviate in two ways from this law.
It accounts for a number of possible series which seems to be in one sense too small,
in another sense too large.
The number seems to be too small, since we restrict the  $n_i(N)$
to one definite value, instead of letting it vary in the interval (\ref{largenumberinterval}).
But, taking the logarithm, this difference becomes negligible.
Making a step from $N$ to $2N$, all those series covered by the interval  (\ref{largenumberinterval})
appear in a series of doubled length, where $n_i(2N)=2N\rho_i$ exactly.
In the Boltzmann-Plank formula the logarithm is taken, and $\ln 2$ is negligible for large $N$.

On the other hand, the dimension of QLB-states appears as being too large,
as quantum correlations between the repeated appearances are allowed.
As an example consider a spin up - spin down probability distribution $2/3,1/3$
and a series of $N=3$ tests.
In the $8$-dimensional Hilbert space, a $3$-dimensional subspace is accounted for
by combinatorics, as in formula (\ref{bsterling}).
It is spanned by the vector $\uparrow\uparrow\downarrow$ and its permuted analogues.
The QLB state, according to the conditions of axiom E, allows for one more state vector,
namely $(2/3)^{1/2}\uparrow\uparrow\uparrow + e^{i\alpha}(1/3)^{1/2}\downarrow\downarrow\downarrow$,
where $\alpha$ is an arbitrary phase.
Since already the restriction of the $n_i(N)$ to strict values introduces a kind of
classical correlation between different appearances, I see no reason to exclude
a quantum correlation.
But these extra states can not lead to an excess over the von Neumann entropy,
as is shown in part \textit{a} in the proof of Theorem \ref{vNE}.

\section{Remarks on history and philosophy and other approaches}\label{history}

\subsection{History}\label{history2}

How did the founding fathers proceed, and is there a relation
between their thoughts and the present point of view?
Of the fathers of Statistical Physics, Clausius, Maxwell, Boltzmann (at the time of the precursor
Daniel Bernoulli a concept of entropy was not yet in existence), Boltzmann was the first one who tried to give
an expression for thermodynamic entropy by atomistic statistical terms.
It is not easy to follow his thoughts in depth, see \cite{K73,GRY08}. But
he, and then Gibbs, Planck, Einstein and von Neumann, \cite{B71,B72,B77,G02,P01,P06,E14,vN29},
each one of them gave a justification of his theory by demonstrating its applicability
in thermodynamics, mainly concerning an ideal gas.
The title of von Neumann's paper, \cite{vN29},
``Thermodynamik quantenmechanischer Gesamtheiten'', already shows us von Neumann's intention:
With this formula one establishes thermodynamics dealing with quantum mechanical ensembles,
using Gibb's \cite{G02} methods, transferring them from classical to quantum mechanics.
Even in his book on the mathematical foundations of quantum mechanics, \cite{vN32},
entropy as a function of density matrices is presented only in the subchapter V2
``Thermodynamische Betrachtungen''.

Now the usefulness of their atomistic formulas giving entropy in Statistical Physics
is without any doubt.
But, in the hindsight, one may question the uniqueness:
Is it possible to find another formula, working as well as the existing one
in giving a probabilistic basis for thermodynamics?
That's one of the questions related to
the analysis presented in this paper.
Other questions may be asked in the emerging theory of Quantum Information.
There is no inductive derivation of a quantity like entropy of quantum information from experiments and practice.
But one can state ``desiderata'' how to characterize an appropriate measure
for information carried by quantum states.
I guess, they should be closely related or identical to those posed in Statistical Physics.
Compare \cite{IKO97}.

In the vast universe of probability distributions
and of mixed quantum states there appears a special kind of mixing:
Laplace, in establishing basic methods for dealing with probability, considered sets of events with equal probabilities,
because of a principle of insufficient reason, later named ``the principle of indifference''.
Boltzmann, considering distribution functions on phase space,
performing a careful analysis of compatibility with  Hamilton-Jacobi theory of classical mechanics,
used \emph{equal a priori probability} distributions on sets of given energy.
Now,  a QLB state $\pi(N)$ is the quantized version of such a distribution.

In analogy to the classical theories, one may interpret a QLB state
as ``each pure state $|\psi\rangle\langle\psi|$ characterized by a vector $\psi\in\range(\pi(N))$
has the same probability''.
But, as stated above, we do not need and we do not use
such philosophical preferences.
We use the property of complete symmetry and invariance with respect to any
reversible continuous quantum mechanical time evolution.
We do not a priori assume invariance of entropy under permutations, which is necessary in classical theory,
so we have the condition on its carrier in the Hilbert space to be undivided,
to be located in one superselection sector.

Already prior to Heisenberg's and Schr\"{o}dinger's creation of Quantum Mechanics
it has been stated by Einstein and Szilard, that the time evolution of mixed states
should be considered with the assumption that pure states evolve into pure states.
Einstein, in \cite{E14}, calls this ``Ehrenfests Adiabatenhypothese'';
von Neumann in \cite{vN29} cites this and also page 777 in \cite{S24},
where Szilard refers to his doctoral thesis,
as precursors to his formulation of entropy.
At the time of creating quantum physics only time evolution as is given by nature
had to be considered.
Now we have the possibility of steering time evolution on an atomistic scale,
\cite{CTGO11}, without observing contradictions to the Second Law of Thermodynamics.
So we postulate axiom (B).

The procedure of deriving the $\rho\log\rho$-formula from equal a priori distributions is not new.
There are only differences in the settings.
Boltzmann considers atoms in a gas, Gibbs shows the equivalence of canonical and microcanonical ensembles.
Each of the founding fathers used expression (\ref{bsterling}) and Stirling's formula;
and also did Schr\"{o}dinger in \cite{S46}.

\subsection{Infinities}

Regarding in a short detour extreme philosophical considerations,
whether the world should be modeled with features of infinity, or not,
there appears the case that
we should consider models of a world where only sectors with finite dimension exist.
Now even if the elementary systems have only one-dimensional sectors, as in the example of a lattice gas with fermions,
quantum mechanics comes into play.
This happens through extensions in the compound systems.
One just has to exclude that the world is completely trivial, being only one-dimensional.
We could actually perform our considerations with a model of the world
where there exist only sectors of finite dimension, but without an upper bound.
But, for simplicity, we assume the existence of a sector with infinite dimension.
This sector can serve as the reference Hilbert space.

In classical mechanics entropy can take on all values, between minus and plus infinity
and including these values.
The desiderata, as expressed in the axioms, do actually lead to
strict positivity of the quantum mechanical entropy function,
with the exception of pure states, see subsection \ref{backbone}.

\subsection{Other axioms and other formulas for ``entropy''}\label{other}

The author did not see a single paper on entropy as a function of states where the universality, as stated
in Proposition \ref{universality}, had not been an unspoken assumption.
But, as studied in \cite{LY99}, there is a related nontrivial property of ``comparison'' in thermodynamics!
In statistical mechanics the quiet assumption of universality
trivially implies the validity of axioms (A) and (B).
We unveiled this quiet assumption and found minimal reasons for
validity of universality.

QLB states are completely characterized by one discrete parameter, so they show a truly ``natural''
ordering, allowing for a ``natural'' inequality between their entropies.
This inequality can also be supported by considering certain irreversible evolutions in physics,
namely ``mixing''.
If $M<N$, the QLB state $\pi(N)$ can be constructed by mixing several
unitarily transformed $\pi(M)$.
Thinking of thermodynamics,
irreversibility has to be represented in an increase of entropy.
These ideas are the basis for Uhlmann's order relation for density matrices,
and could be used for a slightly different set of axioms,
see Section \ref{uhl}.

In the last years there has been much progress
in dealing with entropy of single systems, \cite{R05,T09,A10,D11,D12,T12,A13}.
These works introduced some new concepts, like ``operational quantities'' \cite{D12}, ``smooth entropy'' \cite{R05},
``one-shot entropy approach'' \cite{D11,E13}, ``information spectrum approach'' \cite{NH07}.
All these considerations have in common that,
in the asymptotic regime of large sets of mutually independent systems,
they lead to the von Neumann or Shannon entropy.
These limit properties are cases of the asymptotic equipartition property, \cite{C06},
extended to the quantum scenario, \cite{T09}.
In the present paper we do not discuss single systems in one shot.
We deal with appearances of single systems in an unlimited number.
Regarding the Second Law of Thermodynamics, such a repeated appearance should
be discussed when dealing with an attempt to construct a ``perpetuum mobile''.

Also the development of quantum engineering, handling of nano-physics and
emergence of quantum-information leads to a new interest in basic theories
of entropy-like functionals,
characterizing states according to their ``mixedness'',
giving some kind of distance to the pure states.
Such characterizations need not a priori to be in relation to thermodynamics
and they are about small systems.

Relations between various earlier existing types of entropy have been thoroughly
discussed in \cite{W78}. The appearance of new entropy-like expressions
in statistical mechanics and in quantum information theory makes it necessary to check
their usefulness, according to the general desiderata.
In the setup of this paper and as a consequence of the axioms,
other entropy-like formulas are now excluded. They can not serve as a \emph{general} fundamental
element of statistical mechanics or of quantum information.
Nevertheless other ``entropies'', such as the classical mechanical version,
can be useful as limiting or approximating versions.
They will not respect all the desiderata in general, but they may do so under
restricted conditions. So one may apply the present investigation,
by checking which axioms are not valid, which rules are broken in a special case.
To indicate a well known  important example we look at the classical entropy:
In the classical limit it emerges out of von Neumann entropy
as a functional of probability densities, defined with respect
to an a.c. measure.
This functional can take on negative values and is not bounded from below.
Checking there the validity of the axioms,  one may change the setup (Section \ref{setup})
in such a way that analogies of all the axioms of  Section \ref{axioms} are valid, except of axiom (A).
This axiom refers to the appearance of pure quantum states,
it can not be transformed into a classical version:
A possible analogy for pure states are point measures,
but they have negative infinite entropy -
in case one  extends the definition of entropy to such measures.
A remnant of quantum mechanics gives the
natural measure on phase space, with appropriate powers of Planck's constant
as  units of phase space volume.
If one gets a negative value of classical entropy
in a special calculation, it is a sign that in this case
the classical description is not appropriate, energy or temperature is too low,
or a particle density is too high, and so rules given by uncertainty relations are broken.

\subsection{Uhlmann's order relation}\label{uhl}

States allow for a partial ordering related to mixing. It is  Uhlmann's order relation, \cite{U71,U72,U73,W74}.
Note, that this order relation is not the ordering defined in \cite{LY98}.
In matrix analysis \cite{B97} Uhlmann's order relation is known as majorization:
\\
 $\rho\succ\sigma$ means: $\rho$ is more mixed than $\sigma$; $\rho$ is majorized by $\sigma$.
The definition is
\beq
\rho\succ\sigma \qquad \textrm{iff} \qquad \forall N \quad\sum_{n=1}^N r_n \leq \sum_{n=1}^N s_n,
\eeq
where the $r_n$ and the $s_n$ are the eigenvalues of $\rho$ and of $\sigma$ in decreasing order,
$r_1 \geq r_2 \geq r_3 \ldots$,
and at least one of the inequalities is strict.
(Note that $\pi(M)$ is more mixed than $\pi(N)$ if $M>N$.)
With this ordering one could use different axioms:
\medskip\\
\textbf{C'}) \textbf{Monotonicity}: \emph{\quad $\rho\succ\sigma$ $\Rightarrow$ $S(\rho)\geq S(\sigma)$,
\quad strict for $\sigma$ with finite rank.}
\medskip\\
In matrix analysis the ``monotonicity'' is known as ``Schur-concavity''.
Also this way we have  excluded any trivial $S$,
as above we conclude
$S(\pi(1))=0$.
All other states are more mixed and must have larger entropy.
For states not of QLB form, there are still several possibilities, the first four axioms
do f.e. not yet exclude Renyi-entropies.
So it is again the law of large numbers which leads strictly to von Neumann's formula.
The axiom F of semicontinuity can then be replaced by
\medskip\\
\textbf{F'}) \textbf{}: \qquad $S(\rho) := \sup_\sigma \{S(\sigma)\,|\,\rho\succ\sigma\}$.

\subsection*{Thanks}\nonumber

The author thanks the referees for important hints,
and he gives many thanks to Elliott Lieb and Jakob Yngvason for discussions.



\begin{thebibliography}{PSW06}

\bibitem[A10]{A10}  Janet Anders, Saroosh Shabbir, Stefanie Hilt, Eric Lutz:
{\it Landauer's principle in the quantum domain. }
El. Proc. Theo. Comp. Sci. {\bf 26}, pp. 13 -- 18 (2010)

\bibitem[A13]{A13}  Johan {\AA}berg:
{\it Truly work-like work extraction via a single-shot analysis. }
Nat. Commun. {\bf 4}, 1925, (2013)

\bibitem[AFN74]{AFN74}  J. Acz\'{e}l, B. Forte and C. T. Ng:
{\it  Why the Shannon and Hartley Entropies are 'Natural'. }
Adv. Appl. Prob. {\bf 6}, 131 -- 146, (1974)

\bibitem[B01]{B01}  Daniel Br\"{o}nnimann:
{\it Die Entwicklung des Wahrscheinlichkeitsbegriffs von 1654 bis 1718. }
ftp://stat.ethz.ch/Masters.../Daniel Broennimann-Wahrschkeit.pdf

\bibitem[B71]{B71} Ludwig Boltzmann:
{\it Analytischer Beweis des zweiten Hauptsatzes der mechanischen W\"{a}rmetheorie aus den S\"{a}tzen
\"{u}ber das Gleichgewicht der lebendigen Kraft. }
Sitzb. der Wiener Akad. {\bf LXIII}, 712 -- 732, (1871)

\bibitem[B72]{B72} Ludwig Boltzmann:
{\it Weitere Studien \"{u}ber das W\"{a}rmegleichgewicht unter Gasmolek\"{u}len. }
Sitzb. der Wiener Akad. {\bf LXVI}, 275 -- 370, (1872)

\bibitem[B77]{B77} Ludwig Boltzmann:
{\it \"{U}ber die Beziehung zwischen dem zweiten Hauptsatze der mechanischen W\"{a}rmetheorie
und der Wahrscheinlichkeitsrechnung respektive den S\"{a}tzen \"{u}ber das W\"{a}rmegleichgewicht. }
Sitzb. der Wiener Akad. {\bf LXXVI}, 373 -- 435, (1877)

\bibitem[B97]{B97}  Rajendra Bhatia:
{\it  Matrix Analysis. }
Springer, (1997)

\bibitem[BR81]{BR81}  Ola Bratteli, Derek W. Robinson:
{\it  Operator Algebras and Quantum Statistical Mechanics II. }
Springer-Verlag, New York Heidelberg Berlin, (1981)

\bibitem[C06]{C06} Thomas M. Cover, Joy A. Thomas:
{\it Elements of Information Theory. }
Wiley Interscience, Hoboken (2006)

\bibitem[CT73]{CT73}  E.G.D. Cohen and Walter Thirring:
{\it The Boltzmann Equation. }
Springer-Verlag Wien-New York (1973)

\bibitem[CTGO11]{CTGO11} Claude Cohen-Tannoudji, David Gu\'{e}ry-Odelin:
{\it Advances in Atomic Physics. }
World Scientific, (2011)

\bibitem[D11]{D11}  Oscar C O Dahlsten, R. Renner, E. Rieper, V. Vedral:
{\it Inadequacy of von Neumann entropy for characterizing extractable work. }
New J. of Phys. {\bf 13}:5, 053015 (2011), (10pp.), arXiv:0908.0424

\bibitem[D12]{D12}  F. Dupuis, et al.:
{\it Generalized Entropies. }
Proc. XVIIth Int. C. on Math. Phys., Aalborg, Denmark (2012),
arXiv:1211.3141

\bibitem[E13]{E13}  Dario Egloff et.al:
{\it Laws of thermodynamics beyond the von Neumann regime. }
arXiv:1207.0434

\bibitem[E14]{E14}  Albert Einstein:
{\it  Beitr\"{a}ge zur Quantentheorie. }
Verh. d. Deutsch. Phys. Ges. {\bf 12}, 820 -- 828, (1914)

\bibitem[G02]{G02}  J. Willard Gibbs:
{\it  Elementary Principles in Statistical Mechanics. }
Yale University Press, (1902),
republicated by
Dover Publ., Inc. N.Y., (1960)

\bibitem[GRY08]{GRY08}  Giovanni Gallavotti, Wolfgang L. Reiter and Jakob Yngvason, Eds:
{\it Boltzmann's Legacy. }
European Mathematical Society Publ. House (2008)

\bibitem[IKO97]{IKO97}  R.S.Ingarden, A.Kossakowski and M.Ohya:
{\it   	Information dynamics and open systems . }
Kluwer Academic Publishers, (1997)

\bibitem[K73]{K73}  M. J. Klein:
{\it The Development of Boltzmann's Statistical Ideas. }
in \cite{CT73}

\bibitem[LY98]{LY98}  Elliott H. Lieb and Jakob Yngvason:
{\it  A Guide to Entropy and the Second Law of Thermodynamics. }
Notices of the Amer. Math. Soc. {\bf 45}, 571 -- 581  (1999)
Erratum {\bf 314}, 699  (1999)

\bibitem[LY99]{LY99}  Elliott H. Lieb and Jakob Yngvason:
{\it  The Physics and Mathematics of the Second Law of Thermodynamics. }
Phys. Reports {\bf 310}, 1 -- 96  (1999)
Erratum {\bf 314}, 699  (1999)

\bibitem[LY00]{LY00}  Elliott H. Lieb and Jakob Yngvason:
{\it  A Fresh Look at Entropy and the Second Law of Thermodynamics. }
Phys. Today {\bf 310}, 32 --37  (2000)

\bibitem[LY03]{LY03}  Elliott H. Lieb and Jakob Yngvason:
{\it The Mathematical Structure of the Second Law of Thermodynamics. }
arXiv:math-ph/0204007

\bibitem[NC00]{NC00}  Michael A. Nielsen and Isaac L. Chuang:
{\it Quantum Computation and Quantum Information. }
Cambridge Inversity Press, Cambridge UK 2000

\bibitem[NH07]{NH07}  H.Nagaoka and M.Hayashi:
{\it  An information-spectrum approach to classical and quantum hypothesis testing for simple hypotheses. }
IEEE Transactions on Information Theory{\bf 53},2, 534 -- 549, (2007)

\bibitem[O75]{O75}  W. Ochs:
{\it  A New Axiomatic Characterization of the von Neumann Entropy. }
Rep. Math. Phys. {\bf 3}, 109 -- 120, (1975)

\bibitem[P01]{P01}  Max Planck:
{\it  \"{U}ber das Gesetz der Energieverteilung im Normalspektrum. }
Ann. d. Phys. {\bf 4}, 4, 553 -- 563, (1901)

\bibitem[P06]{P06}  Max Planck:
{\it  Vorlesungen \"{u}ber die Theorie der W\"{a}rmestrahlung. }
Barth, Leipzig (1906)


\bibitem[R02]{R02}  Mary Beth Ruskai:
{\it Inequalities for Quantum Entropy:
A Review with Conditions for Equality. }
J. Math. Phys. {\bf 43}, 4358 -- 4375 (2002)

\bibitem[R05]{R05}  Renato Renner:
{\it Security of Quantum Key Distribution. }
Dissertation,
arXiv:0512258v2

\bibitem[S24]{S24} Leo Szilard:
{\it \"{U}ber die Ausdehnung der ph\"{a}nomenologischen Thermodynamik auf die Schwankungserscheinungen. }
Zeitschrift f\"{u}r Physik. {\bf XXXII} (10), 753 -- 788  (1925)

\bibitem[S46]{S46}  Erwin Schr\"{o}dinger:
{\it  Statistical Thermodynamics. }
Cambridge University Press, London (1946)

\bibitem[S48]{S48}  C. F. Shannon:
{\it A Mathematical Theory of Communication. }
The Bell System Technical Journal {\bf 27}, 379 -- 423, 623 -- 656, July, October (1948)

\bibitem[SW49]{SW49}  C. F. Shannon and W. Weaver:
{\it The Mathematical Theory of Communication. }
Univ. of Illinois Press, Chicago (1949)

\bibitem[T09]{T09}  M. Tomamichel, R. Colbeck, R. Renner:
{\it A fully Quantum Asymptotic Equipartition Property. }
IEEE Trans. Inf. Theory {\bf 55},12, 5840 -- 47 (2009)

\bibitem[T12]{T12}  Marco Tomamichel:
{\it A Framework for Non-Asymptotic Quantum Information Theory. }
Dissertation, arXiv:1203.2142

\bibitem[U71]{U71}  Armin Uhlmann:
{\it S\"{a}tze \"{u}ber Dichtematrizen. }
Wiss. Z. Karl-Marx-Univ. {\bf 20}, 633 -- 637, (1971)

\bibitem[U72]{U72}  Armin Uhlmann:
{\it Endlich-dimensionale Dichtematrizen I. }
Wiss. Z. Karl-Marx-Univ. {\bf 21}, 421 -- 452, (1972)

\bibitem[U73]{U73}  Armin Uhlmann:
{\it Endlich-dimensionale Dichtematrizen II. }
Wiss. Z. Karl-Marx-Univ. {\bf 22}, 139 -- 177, (1973)

\bibitem[vN29]{vN29} Johann von Neumann:
{\it Thermodynamik quantenmechanischer Gesamtheiten. }
G\"{o}tt. Nachr. {\bf 1}, 273 -- 291  (1929)

\bibitem[vN32]{vN32} Johann von Neumann:
{\it Mathematische Grundlagen der Quantenmechanik. }
Springer-Verlag Berlin-Heidelberg,   (1932)

\bibitem[W74]{W74}  Alfred Wehrl:
{\it How chaotic is a state of a quantum system? }
Rep. Math. Phys. {\bf 6}, 15 -- 28, (1974)

\bibitem[W78]{W78}  Alfred Wehrl:
{\it  General properties of entropy. }
Rev. Mod. Phys. {\bf 50}, 221 -- 260, (1978)

\bibitem[W90]{W90}  Alfred Wehrl:
{\it  Information-Theoretical Aspects of Quantum-Mechanical Entropy. }
Univ. Vienna preprint UWThPh-1990-20, unpublished, (1990)

\end{thebibliography}
\end{document}